\documentclass{article}
\usepackage{amsfonts}
\usepackage{amsmath}
\usepackage{amssymb}
\usepackage{amsthm}
\usepackage{fullpage}
\usepackage[hidelinks]{hyperref}
\usepackage{cleveref}
\usepackage{mathtools}
\usepackage{cite}
\newtheorem{theorem}{Theorem}[section]
\newtheorem{claim}[theorem]{Claim}

\newtheorem{lemma}[theorem]{Lemma}

\theoremstyle{definition}

\newcommand{\eps}{\varepsilon}

\newcommand{\E}{\mathbb E}
\newcommand{\F}{\mathbb F}
\newcommand{\one}[1]{\mathbf 1 \left[#1\right]}

\title{Approximating Independent Sets Using Additive Combinatorics}
\author{Shachar Lovett\thanks{Supported by Simons Investigator Award
\#929894 and NSF award CCF-2425349.}\\
University of California, San Diego\\
\texttt{slovett@ucsd.edu}}

\begin{document}

\maketitle

\begin{abstract}
We give a deterministic quasi-polynomial time algorithm for approximating
the largest independent set in a graph. We show that if an $N$-vertex graph contains an independent set of size $2^{-t} N$, then we can find an independent set of size $2^t$ in quasi-polynomial time, for all $t$ up to a constant factor of $(\log N)^{1/3}$. This gives an approximation factor of $N / \exp((\log N)^{1/3})$, which has the same shape as known hardness results for quasi-polynomial time algorithms, up to a different exponent of the logarithm.

The algorithm builds on a structural
principle: a graph containing a large independent set must also contain a
large sparse structured set, and such a set can be found efficiently. A greedy
algorithm then extracts an independent set from it. Our proof draws on tools in additive combinatorics, in particular
spreadness and sifting, to obtain the combinatorial structural results. We then convert these to efficient algorithms using pseudo-marginals obtained by semidefinite programming.
\end{abstract}

\section{Introduction}

Finding a maximum independent set is a classical NP-hard problem~\cite{karp1972reducibility}.
Throughout, $G=(V,E)$ is a simple undirected graph on $N$ vertices, and
$\alpha(G)$ denotes its maximum independent-set size. An $r$-approximation
algorithm returns an independent set of size at least $\alpha(G)/r$.

\paragraph{Polynomial time approximation.}
Feige~\cite{feige2004approximating} gives the best known polynomial time approximation
ratio,
\[
  O\!\left(\frac{N(\log\log N)^2}{(\log N)^3}\right).
\]
On the hardness side, H\aa stad~\cite{hastad1999clique} proved that, for every constant
$\eps>0$, an $N^{1-\eps}$-approximation is impossible in polynomial time
unless $\mathrm{NP}=\mathrm{ZPP}$. Zuckerman~\cite{zuckerman2007linear} derandomized
this result, establishing NP-hardness for the same factor. So, the gap between known algorithms and hardness results is large: polylogarithmic versus small polynomial savings over the trivial factor of $N$.

\paragraph{Quasi-polynomial time approximation.}
We now switch to consider algorithms that are allowed to run in quasi-polynomial time, namely  $N^{O(\log N)^a}$ time for any constant $a>0$. Within this time limit, exhaustive search gives a an approximation ratio of
$O(N/(\log N)^a)$, by enumerating all vertex sets of size
at most $(\log N)^a$.
The enumeration either finds an optimum independent set or finds one of
size $(\log N)^a$.
On the hardness side, building on Khot~\cite{khot2001improved},
Zuckerman~\cite{zuckerman2007linear} ruled out quasi-polynomial time approximation
within $N/2^{(\log N)^{1-\gamma}}$ for some constant $\gamma >0$, unless
NP has deterministic quasi-polynomial time algorithms.
Under the stronger assumption that NP has no bounded-error randomized
quasi-polynomial time algorithms, Khot and Ponnuswami~\cite{khot2006better} ruled out
polynomial time approximation within $N/2^{(\log N)^{3/4+\eps}}$ for every
constant $0<\eps<1/4$.

In this paper, we design a new quasi-polynomial time algorithm that achieves the approximation ratio
\[
  N/2^{\Omega(\log N)^{1/3}}.
\]
This has the same form as the hardness bound
of~\cite{khot2006better}, though the exponents differ: $1/3$ in our algorithm and
$3/4+\eps$ in the hardness bound. Formally, we prove the following theorem.

\begin{theorem}[Finding a large independent set]\label{thm:ind}
There is an absolute constant $c>0$ such that the following holds.
Let $t$ be an integer with $1\le t\le c(\log N)^{1/3}$. If $\alpha(G) \ge 2^{-t} N$, then a deterministic algorithm finds an
independent set in $G$ of size at least $2^t$ in time $N^{O(t)}$.
\end{theorem}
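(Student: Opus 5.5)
Throughout, let $I$ be an independent set of size $|I| \ge 2^{-t}N$. The plan is in two stages.

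\textbf{Stage 1: find a sparse structured set.} The first goal is a vertex set $S$ and a degree bound $d$ such that (i) $G[S]$ has maximum degree at most $d$, or at least average degree at most $d$ after trimming high-degree vertices, and (ii) $|S|/(d+1) \ge 2^t$. Given such an $S$, the greedy algorithm finishes the job: repeatedly pick a vertex of minimum degree in $G[S]$, add it to the output, and delete it together with its neighbours. By Tur\'an/Caro--Wei this returns an independent set of size at least $|S|/(d+1) \ge 2^t$.

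To find $S$, the idea is to condition on a few vertices of $I$, which mirrors the sifting/spreadness arguments of additive combinatorics. Consider the uniform distribution $\mu$ on $I$, viewed as a set of density $\delta = 2^{-t}$ in $V$. Either $\mu$ is spread with respect to neighbourhoods, or it is not. \emph{Spread} here means that for every set $T$ of at most $k$ vertices, a random vertex $v \in I$ is adjacent to any fixed vertex with probability small compared to $2^{-t}$, in a suitable averaged sense. If $\mu$ is not spread, some small set $T$ has the property that restricting to the common non-neighbourhood of $T$ increases the relative density of $I$ by a factor $1 + \Omega(1/t)$ or more. Since density is at most $1$, this increment can happen only $O(t^2)$ times, as density grows from $2^{-t}$ to $1$ in multiplicative steps of size $(1+1/t)$. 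At that point we land on a set $S$, the common non-neighbourhood of $O(t^2)$ chosen vertices, on which $I$ is spread. Spreadness of $I \cap S$ then forces $G[S]$ to be sparse enough: if a typical vertex of $S$ had more than $|S| 2^{-t}$ neighbours inside $S$, this would give a density increment on its non-neighbourhood. The parameter tradeoff, with increments of size $1/t$ and $t^2$ rounds each costing a loss of $2^{-t}$, is what restricts $t \lesssim (\log N)^{1/3}$. I expect this to force $t^3 \le c\log N$ so that $|S| \ge N 2^{-O(t^3)}$ stays at least $2^{2t}$.

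\textbf{Stage 2: make it algorithmic.} We do not know $I$, so we replace $\mu$ with pseudo-marginals. Solve a degree-$O(t)$ Lasserre/SDP relaxation of independent set with objective value at least $2^{-t}N$; this is solvable in $N^{O(t)}$ time. Conditioning on a vertex set $T$ of size $O(t)$ is allowed at this level, and the pseudo-expectations give a pseudo-density for each conditioned non-neighbourhood. Each density-increment step is implemented by enumerating all candidate small sets $T$ in $N^{O(t)}$ time and checking the increment via the pseudo-marginals. The whole procedure is deterministic.

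\textbf{Main obstacle.} The hardest step is this transfer. We need that a \emph{pseudo}-distribution that is spread still certifies sparsity of $G[S]$ strongly enough for greedy to work, using only low-degree moment inequalities such as Cauchy--Schwarz on the pairwise marginals $\Pr[u,v\in I]=0$ for edges $uv$. I would first try to express spreadness entirely through the second-moment matrix and derive sparsity by a PSD argument. The alternative is to choose, among all $T$, the conditioning that maximizes the pseudo-density, and argue that local optimality gives the degree bound directly.
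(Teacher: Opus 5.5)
Your last step, a set $S$ with $|S|/(d+1)\ge2^t$ followed by min-degree greedy, is exactly how the paper derives \Cref{thm:ind} from \Cref{thm:sparse} with $h=t+1$. The gap is in Stage~1. It relies on the dichotomy that either $G[S]$ is $2^{-t}$-sparse, or some $T$ with $|T|=O(t)$ raises the density of $I$ by a factor $1+\Omega(1/t)$ on its common non-neighbourhood. This dichotomy is false. Your own justification shows why: deleting the neighbourhood of a vertex of degree about $2^{-t}|S|$ raises the density of $I$ only by a factor $1+O(2^{-t})$, and only if that vertex lies in $I$. Reaching $1+\Omega(1/t)$ would need about $2^t/t$ such vertices, which ruins the $N^{O(t)}$ enumeration. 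Concretely, let $V=I\cup J$ with $I$ independent, $|I|=2^{-t}N$, and every other pair an edge independently with probability $p=2^{-t/2}$. Deleting the neighbourhood of a vertex of $J$ removes the same proportion of $I$ and of $J$. For a vertex of $I$ the gain is only a factor $1+O(p)$. So every common non-neighbourhood $S$ of $k=O(t^2)$ vertices keeps a $(1-p)^k=1-o(1)$ fraction of $J$. Then $G[S]$ is essentially a $p$-random graph, and $|S|/(d+1)\approx1/p=2^{t/2}\ll2^t$, with or without trimming. Common non-neighbourhoods cannot separate $I$ from $J$ unless $k=\Omega(t/p)$. Stage~2 also has its own inconsistency: conditioning on $O(t^2)$ vertices needs Lasserre level $\Omega(t^2)$, not $O(t)$.

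The missing idea is to let the (pseudo-)distribution itself define the candidate sets, and to get a constant-factor increment from \emph{positive} correlations. The paper sets $U_S=\{u\notin S:\mu_S(u)\ge(3/2)^{|S|}2^{-(t+1)}\}$ and uses the normalized covariance $M(u,v)=\mu(u,v)/(\mu(u)\mu(v))-1$. This matrix is positive semidefinite (for pseudo-marginals, via the moment matrix, \Cref{claim:covariance-psd}). It equals $-1$ on edges and is at most $1/q$ when all marginals are at least $q$. By the Schur product theorem, $\E_{u,v\in U}M(u,v)^\ell\ge0$ for odd $\ell$. Taking $\ell\approx h$ with $2^{-\ell}\le d(U)/2$, the edges' total contribution $-d(U)$ outweighs all pairs with $0\le M<1/2$. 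This forces a $(d(U)/2)q^\ell$ fraction of pairs to have $M\ge1/2$. Conditioning on a single vertex $v$ then raises by a factor $3/2$ the marginals of a $2^{-O(th)}$ fraction of $U$ (\Cref{lemma:increment}). The next set consists of the vertices positively correlated with $v$, not the non-neighbourhood of $v$. Since marginals start at $2^{-(t+1)}$ and never exceed $1$, at most $2(t+1)$ conditionings occur (\Cref{lem:iterate}). This gives $|U|\ge2^{-O(t^2h)}N$ with $d(U)\le2^{-h}$, Lasserre level $O(t)$, and an $N^{O(t)}$ enumeration over $|S|\le2(t+1)$. Your remark about a PSD argument on the second-moment matrix points in this direction. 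What the proposal lacks is the odd entrywise power, and per-vertex marginal thresholds in place of densities on non-neighbourhoods.
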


To obtain the approximation guarantee, take
$t= c(\log N)^{1/3}$ and run the algorithm, returning a
singleton if it finds no independent set. If $\alpha(G)\ge N/2^t$,
the output has size at least $2^t$, giving an approximation factor of at
most $N/2^t$. If $\alpha(G)<N/2^t$, a singleton already gives this factor.
The running time is $N^{O((\log N)^{1/3})}$ which is quasi-polynomial in $N$.

\paragraph{Large sparse structured sets.}

The algorithm in \Cref{thm:ind} is based on two steps. First, we find a large sparse \emph{structured} set in the graph, and then extract an independent set from it by a simple greedy algorithm. The first step is the main innovation of this paper. It is a manifestation of the following guiding principle: 

\begin{center}
\begin{minipage}{0.9\linewidth}
\centering\itshape
A large exact solution guarantees a large structured approximate
solution,\\ with the structure captured by a small, efficiently
searchable family.
\end{minipage}
\end{center}
In our case, an exact solution is a large independent set, and an approximate solution is a set with few edges. The specific structure we utilize will become apparent soon. 

We now move to the formal statement of our main technical theorem.
For a nonempty set of vertices $U\subseteq V$, define its edge
density by
\[
  d(U)\coloneq\Pr_{u,v\in U}[uv\in E]
      =\frac{2|E(G[U])|}{|U|^2}.
\]
Here $u,v$ are chosen independently and uniformly from $U$.

\begin{theorem}[Finding a large sparse set]\label{thm:sparse}
Let $t,h\ge1$ be integers. If $\alpha(G)\ge2^{-t}N$, then a deterministic
algorithm running in time $N^{O(t)}$ finds a nonempty set $U\subseteq V$
satisfying
\[
  |U|\ge2^{-O(t^2h)}N,\qquad d(U)\le2^{-h}.
\]
\end{theorem}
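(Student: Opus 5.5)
The plan is to use a density-increment argument built around \emph{common non-neighbourhoods} of a large hidden independent set, and then make each step algorithmic.

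Fix an independent set $I$ with $|I|\ge 2^{-t}N$. For a set $S\subseteq V$, let $W(S)$ be the set of vertices adjacent to no vertex of $S$, possibly intersected with a current ambient set. The key observation is that $I\subseteq W(S)$ whenever $S\subseteq I$. So every $W(S)$ with $S\subseteq I$ is a candidate output in which $I$ has relative density $\delta(W)=|I\cap W|/|W|\ge 2^{-t}$.

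We maintain a pair $(W,\delta)$ with $I\subseteq W$, starting from $W=V$ and $\delta\ge 2^{-t}$. Since $\delta\le 1$ always, $\delta$ can be doubled at most $t$ times. The whole proof reduces to the following \emph{increment lemma}.

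\textbf{Increment lemma.} If $d(W)>2^{-h}$ and $I\subseteq W$ has relative density $\delta$, then there is a $W'\subseteq W$ with $I'\subseteq W'$ for some $I'\subseteq I$, such that
\[
|I'|/|W'|\ge 2\delta \quad\text{and}\quad |W'|\ge 2^{-O(th)}|W|.
\]
Moreover, $W'$ lies in a family of at most $N^{O(1)}$ candidate sets computable from $W$.

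Iterating the lemma at most $t$ times, the process must stop at a set $W$ with $d(W)\le 2^{-h}$. Its size is at least $(2^{-O(th)})^{t}N=2^{-O(t^2h)}N$, as required.

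To prove the increment lemma, I would use \emph{sifting}. Sample $r=O(h)$ vertices $v_1,\dots,v_r$ of $I$ at random and pass to $W\cap W(\{v_1,\dots,v_r\})$.
\begin{itemize}
\item A vertex $w$ survives with probability $(1-p_w)^r$, where $p_w$ is the fraction of $I$ adjacent to $w$. Vertices of $I$ always survive.
\item Split into two cases according to whether $I$ is \emph{spread} in $W$, meaning that the edges of $W$ are not concentrated on vertices with small $p_w$.
\item If $I$ is spread, the vertices with $p_w\gtrsim 2^{-h}$ are killed with good probability. Comparing $\E[\text{edges surviving}]$ with $\E[|\text{survivors}|]^2$, which requires a second-moment or convexity bound, gives a survivor set of density at most $2^{-h}$ and size at least $2^{-O(t h)}|W|$. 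In this case we stop immediately.
\item If $I$ is not spread, there is a witness set on which $I$ is twice as dense relative to its size. This comes from the standard spread-approximation dichotomy. The witness costs a factor of at most $2^{-O(h)}$, or $2^{-O(th)}$ after accounting for $\delta\ge 2^{-t}$, in size, and we recurse on it.
\end{itemize}

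To make all of this deterministic in time $N^{O(t)}$, we cannot enumerate the $O(th)$ sampled vertices overall. Instead, I would replace the unknown uniform distribution on $I$ with pseudo-marginals: an SDP (Lov\'asz-theta / Lasserre-type) solution of degree $O(t)$ whose vectors are consistent with an independent set of density $2^{-t}$. Then $p_w$ and the pairwise survival statistics can be computed from the SDP, and the random choice of $v_1,\dots,v_r$ can be derandomized by conditional expectations. Each increment step conditions the SDP on $O(1)$ vertices, so over $t$ steps the total enumeration and conditioning cost is $N^{O(t)}$.

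I expect the main obstacle to be the sifting/spreadness dichotomy with the right parameters. Specifically, the cost of each density doubling must be only $2^{-O(th)}$ and not $2^{-O(h)\cdot 2^{h}}$. A naive degree-based increment only improves $\delta$ by a factor $1+2^{-h}$, which would force $2^h t$ rounds, so the argument genuinely needs the multi-vertex sift.

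The second delicate point is checking that the SDP pseudo-marginals satisfy the inequalities the combinatorial argument uses. These are mainly positivity of second moments, so a degree-2 or low-degree relaxation should suffice, which keeps the running time at $N^{O(t)}$.
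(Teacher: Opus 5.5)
There is a genuine gap, and it sits exactly where you expect the main obstacle. The existence part of your increment lemma is trivial (take $W'=I$). All of its content is therefore the claim that a good $W'$ lies in a small family that can be defined without knowing $I$.

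The only family you describe is common non-neighbourhoods of sampled vertices, intersected over rounds. So every candidate is a union of cells of the partition of $V$ by adjacency pattern to some $k=O(th)$ vertices. No family of this kind suffices. Let $V=I\cup J$, where:
\begin{itemize}
\item $|I|=2^{-t}N$ and $I$ is independent, with no edges to $J$;
\item $J$ is a random graph of density $p=2^{3-h}$;
\item $N$ is huge.
\end{itemize}
Then $d(V)>2^{-h}$. Large subsets of $J$ have density about $p$, so a valid output must have more than three fifths of its vertices in $I$. But for any $v_1,\dots,v_k$, all of $I$ shares the all-zero pattern with about $(1-p)^k|J|$ vertices of $J$. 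Hence for $k=O(th)\ll 2^h$, every union of cells either misses $I$ (and is then dense) or has $I$-fraction at most about $2^{-t}\le 1/2$. One would need $k=\Omega(2^h)$. In this graph, sifting by vertices of $I$ does nothing at all, and the unspecified ``witness set'' of your non-spread case cannot be an adjacency-defined set either.

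Separately, your algorithmic step does not achieve time $N^{O(t)}$, which the theorem requires uniformly in $h$. Already for $t=1$ and $h$ a small multiple of $\log N$, the algorithm must run in polynomial time. A sift by $r=O(h)$ vertices only helps if the vertices come from the same independent set. Its survival statistics are then moments of degree about $r$, which need SDP level $\Omega(h)$ and time $N^{\Omega(h)}$. Enumerating the sampled vertices instead costs $N^{O(th)}$. Sampling independently from the pseudo-marginals loses the invariant $I\subseteq W$. Your claim that each step conditions on only $O(1)$ vertices contradicts your own sift.

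The missing idea is the one the paper uses: work with a (pseudo-)distribution and its marginals rather than a single hidden $I$.
\begin{itemize}
\item The matrix $M(u,v)=\widetilde\mu(u,v)/(\widetilde\mu(u)\widetilde\mu(v))-1$ is positive semidefinite using only two SDP levels, and it equals $-1$ on edges.
\item By the Schur product theorem, $\E_{u,v\in U}M(u,v)^\ell\ge0$ for odd $\ell\approx h$, at no extra level.
\item This forces a $(\delta/2)q^\ell=2^{-O(th)}$ fraction of pairs with $M\ge1/2$. Hence there is a \emph{single} vertex whose conditioning raises $2^{-O(th)}|U|$ marginals by a factor $3/2$.
\end{itemize}
The $\ell$-fold sift thus lives only in the analysis. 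The potential is the threshold $(3/2)^{|S|}2^{-(t+1)}\le1$, so at most $2(t+1)$ conditionings occur. The candidates are the threshold sets $U_S$ of conditioned pseudo-marginals, of which there are $N^{O(t)}$. They depend on the whole graph through the SDP solution, which is how they escape the obstruction above.
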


Before describing the proof approach for \Cref{thm:sparse}, we first show that
\Cref{thm:ind} follows from it by simple greedy pruning.

\begin{proof}[Proof of \Cref{thm:ind} given \Cref{thm:sparse}]
Apply \Cref{thm:sparse} with $h=t+1$. For a sufficiently small choice of $c$, the resulting set $U$ satisfies
\[
  |U|\ge N2^{-O(t^3)}\ge2^{t+1},\qquad d(U)\le2^{-(t+1)}.
\]
We work in the induced subgraph $G[U]$.
Consider the greedy algorithm that iteratively selects a vertex of minimum degree, adds it to the independent set, and deletes it together with its neighbors. This algorithm is well-known to attain Tur\'an's bound
(see, e.g.,~\cite{halldorsson1997greed}), finding an independent set in $G[U]$ of size at least
\[
  \frac{|U|}{1+d(U)|U|}
  =\frac{1}{|U|^{-1}+d(U)}\ge2^t.
\]
\end{proof}

\subsection{Proof overview}
We first illustrate our guiding principle for Cayley graphs over $\F_2^n$, where
existing additive-combinatorial results supply the structured sparse sets.
Our proof for general graphs uses some of the same ideas directly, in particular
spreadness and sifting, developed by Kelley and Meka~\cite{kelley2023strong}
in the arithmetic setting and by Kelley, Lovett, and Meka~\cite{kelley2024explicit}
in the combinatorial setting.

\paragraph{Cayley graphs.}
Consider a Cayley graph $G$ on $\mathbb F_2^n$, with $N=2^n$ vertices and
generating set $S\subseteq\mathbb F_2^n\setminus\{0\}$: vertices $x,y$ are
adjacent when $x+y\in S$. If $A$ is an independent set in $G$, then
\[
  (A+A)\cap S=\emptyset.
\]
We will use the additive structure in $A+A$ to produce a structured sparse set in $G$.

Assume $|A|\ge2^{-t}N$ for an integer $t\ge1$. The quasi-polynomial Bogolyubov lemma, following the
Kelley-Meka framework~\cite{kelley2023strong} in the form proved by Hunter and
Pohoata~\cite{hunter2025off}, gives, for every integer $h\ge1$, a linear subspace
$U\le\mathbb F_2^n$ such that
\[
  \operatorname{codim}(U)=O(t^5 h^2),\qquad
  |U\setminus(A+A)|\le2^{-h}|U|.
\]
Since $S$ avoids $A+A$, at most a $2^{-h}$ fraction of $U$ lies in $S$.
For independent uniform $x,y\in U$, the sum $x+y$ is uniform in $U$.
Consequently,
\[
  |U|\ge N2^{-O(t^5 h^2)},\qquad
  d(U)=\frac{|S\cap U|}{|U|}\le2^{-h}.
\]
This is the structural principle in a concrete form: $U$ is large and
sparse, and its structure is that of a linear subspace. A subspace of
codimension $k$ is the common zero set of $k$ linear forms, for which there are at most $N^k$ choices. Enumerating subspaces of codimension at most
$O(t^5h^2)$ and checking their edge densities therefore finds a suitable
$U$ in deterministic time $N^{O(t^5h^2)}$.

\paragraph{General graphs.}
We begin by assuming access to a distribution on independent sets
whose expected size is large. This lets us first develop the
combinatorial argument. We then show that access to the distribution
can be replaced by access to its marginals, namely the probabilities
that small sets of vertices are contained in the random independent
set. Finally, we show that pseudo-marginals computed by an SDP suffice
to implement the argument efficiently.

The first step is to increase inclusion probabilities by conditioning.
If a set of vertices contains many edges and each vertex has a
reasonably large inclusion probability, \Cref{lemma:increment} shows
that conditioning on one vertex belonging to the independent set
increases the inclusion probabilities of many other vertices by a
constant factor.

We then iterate this step, retaining the vertices whose inclusion
probabilities increase. Since probabilities cannot exceed one, the
process must stop with a sparse set. In \Cref{lem:iterate}, we track
the number of retained vertices to show that this sparse set is large.

To make the argument algorithmic, we show that it uses only small
marginals and remains valid for pseudo-marginals obtained from a
semidefinite relaxation. Each candidate sparse set is determined by
a small conditioning set. Enumerating these conditioning sets and
checking the resulting candidates gives an efficient algorithm.

\subsection{Related work}

Our work builds on a long line of research in additive combinatorics
and approximation algorithms. Its main contribution is to use
additive-combinatorial techniques to obtain strong quantitative
guarantees from semidefinite relaxations of the independent set problem.

\paragraph{Finding almost-cliques and sparse sets.}
Algorithms for finding almost-cliques
\cite{feige1997densest,barman2018approximating} already yield sparse
sets when large independent sets exist. These algorithms preserve
the set size; we allow it to shrink while achieving much smaller
edge density within quasi-polynomial time. Related hardness results appear
in~\cite{braverman2017eth}.

\paragraph{From distributions to SDP rounding.}
Barak, Kelner, and Steurer~\cite{barak2014rounding} develop a framework
for converting arguments about distributions into SDP rounding
algorithms. We follow this approach, using only small marginals.
High entrywise powers of the correlation matrix require no
additional SDP levels.

\paragraph{Correlation rounding and conditioning.}
Conditioning is central to correlation
rounding~\cite{barak2011rounding,raghavendra2012approximating},
where progress is measured through variance or entropy.
Our iteration increases all retained inclusion marginals by a
constant factor, bounding the number of conditioning steps.

\paragraph{Odd correlation moments and spectral methods.}
Jin, Milojevi\'c, Tomon, and Zhang~\cite{jin2025small} use entrywise
cubes and the Schur product theorem to obtain dense subgraphs under
spectral assumptions. We apply a related odd-moment argument to
a normalized covariance matrix, obtaining positive correlations
that yield a marginal increment.

\paragraph{Hierarchies for independent set and densest subgraph.}
Local distributions and small conditioning sets also appear in
independent-set algorithms
\cite{guruswami2013rounding,bansal2018lovasz}
and densest-subgraph algorithms~\cite{bhaskara2010detecting}.
Our argument uses these tools to find sparse sets, with guarantees
depending only on the size of the promised independent set.

\paragraph{Containers and additive-combinatorial sifting.}
Container theorems~\cite{balogh2015independent,saxton2015hypergraph}
provide another precedent for sparse sets encoded by small vertex
subsets. Our density-increment argument draws on
Kelley--Meka and Kelley--Lovett--Meka
\cite{kelley2023strong,kelley2024explicit};
Bloom--Sisask~\cite{bloom2023kelley} explain the connection between
sifting and dependent random choice.

\subsection{Open problems}

The most immediate open problem is to narrow the gap between algorithms
and hardness results for independent set approximation. A broader and,
in our view, more interesting question is where else techniques from
additive combinatorics can yield stronger quantitative guarantees for
approximation algorithms. In particular, can we find further
realizations of our guiding principle, where a large exact solution
guarantees a large structured approximate solution belonging to a small,
efficiently searchable family?

\paragraph{Paper organization.} We prove \Cref{thm:sparse} in \Cref{sec:proof} and describe our AI methodology in \Cref{sec:ai}.

\section{Finding large sparse structured sets, efficiently}
\label{sec:proof}

We prove \Cref{thm:sparse} in this section. We first work with a
distribution $\mu$ over independent sets whose expected size is large.
We then show that the argument uses only small marginals and remains
valid for pseudo-marginals, which can be computed efficiently by
semidefinite programming.

\paragraph{Notation and definitions.}
Let $G=(V,E)$ be a graph on $N$ vertices. For a distribution $\mu$ on
the independent sets of $G$ and a set $T\subseteq V$, write
\[
  \mu(T)\coloneq\Pr_{I\sim\mu}[T\subseteq I].
\]
Thus $\mu(T)$ denotes the probability that the random independent set
\emph{contains} $T$. We abbreviate $\mu(\{v\})$ as $\mu(v)$ and
$\mu(\{u,v\})$ as $\mu(u,v)$. In particular,
\begin{equation}\label{eq:marginal-bounds}
  0\le\mu(v)\le1,\qquad
  0\le\mu(u,v)\le\min\{\mu(u),\mu(v)\},\qquad
  \mu(u,v)=0\quad\text{if }uv\in E.
\end{equation}
For $S\subseteq V$ with $\mu(S)>0$, let $\mu_S$ be the distribution of
$I\sim\mu$ \emph{conditioned} on $S\subseteq I$, where
\[
  \mu_S(T)=\frac{\mu(S\cup T)}{\mu(S)}.
\]
We write $\mu_v$ for $\mu_{\{v\}}$. Successive conditionings satisfy
$(\mu_S)_T=\mu_{S\cup T}$ whenever $\mu(S\cup T)>0$.

\subsection{Increasing marginals by conditioning}

The first step is inspired by the sifting and spectral-positivity arguments
of Kelley and Meka~\cite{kelley2023strong} and their extension to general bipartite graphs
through grid norms by Kelley, Lovett, and Meka~\cite{kelley2024explicit}. The main idea is to use high moments of correlations in a distribution on
independent sets to obtain a density increment by conditioning.

\begin{lemma}[Increased marginals in dense subgraphs]\label{lemma:increment}
Let $\mu$ be a distribution on independent sets of $G$, and let
$U\subseteq V$. Let $\delta,q \in (0,1)$
and suppose $d(U)\ge\delta$ and
$\mu(u)\ge q$ for every vertex $u\in U$.
Let $\ell$ be a positive odd integer satisfying $2^{-\ell}\le\delta/2$.
Then some $v\in U$ satisfies
\[
  \left|\left\{u\in U:\mu_v(u)\ge\tfrac32\mu(u)\right\}\right|
  \ge\frac\delta2q^\ell|U|.
\]
\end{lemma}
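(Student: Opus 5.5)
The plan is to run a Kelley--Meka style odd-moment argument on the normalized covariance matrix of the indicator variables, restricted to $U$. For $u,v\in U$ define
\[
  R(u,v)\coloneq\frac{\mu(u,v)-\mu(u)\mu(v)}{\mu(u)\mu(v)}=\frac{\mu_v(u)}{\mu(u)}-1 .
\]
This is well defined because $\mu(u)\ge q>0$ on $U$. The event we want, $\mu_v(u)\ge\frac32\mu(u)$, is exactly $R(u,v)\ge\frac12$, so it suffices to show that many ordered pairs in $U\times U$ (diagonal included) have $R(u,v)\ge\frac12$. An averaging argument over $v$ then finishes the proof.

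\textbf{Step 1 (positivity).} The matrix $C(u,v)=\mu(u,v)-\mu(u)\mu(v)$ is the covariance matrix of the indicators $\one{u\in I}$ for $I\sim\mu$, so it is positive semidefinite. Writing $D=\operatorname{diag}(\mu(u))_{u\in U}$, we have $R=D^{-1}CD^{-1}$, hence $R\succeq0$. By the Schur product theorem, the entrywise power $R^{\circ\ell}$ is also positive semidefinite. Testing it against the all-ones vector gives
\[
  \sum_{u,v\in U}R(u,v)^\ell\ge0 .
\]

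\textbf{Step 2 (splitting the sum).} Every entry satisfies $R(u,v)\ge-1$. For an edge $uv\in E$ we have $\mu(u,v)=0$, so $R(u,v)=-1$, and since $\ell$ is odd each edge pair contributes exactly $-1$. There are $d(U)|U|^2\ge\delta|U|^2$ such ordered pairs. The remaining pairs fall into three classes.
\begin{itemize}
\item Pairs with $R\le-\frac12$ contribute negatively (again because $\ell$ is odd), so we may drop them from the lower bound.
\item Pairs with $|R|<\frac12$ contribute at most $2^{-\ell}\le\delta/2$ in absolute value, so at most $\frac\delta2|U|^2$ in total.
\item Call the pairs with $R\ge\frac12$ \emph{good}.
\end{itemize}
Combining these with Step 1, the total contribution of the good pairs is at least $\delta|U|^2-\frac\delta2|U|^2=\frac\delta2|U|^2$.

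\textbf{Step 3 (upper bound on each good term).} From \eqref{eq:marginal-bounds}, $\mu(u,v)\le\mu(v)$, so
\[
  R(u,v)\le\frac1{\mu(u)}-1<\frac1q .
\]
Hence each good term is at most $q^{-\ell}$. It follows that the number of good ordered pairs is at least $\frac\delta2q^\ell|U|^2$. By averaging over $v\in U$, some $v$ has at least $\frac\delta2q^\ell|U|$ vertices $u$ with $(u,v)$ good, which is the claim.

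The only delicate point is the sign bookkeeping in Step 2. Oddness of $\ell$ is what makes the edge terms contribute $-1$ each and lets all strongly negative non-edge terms be discarded, while the condition $2^{-\ell}\le\delta/2$ absorbs the middle range. Everything else is the standard PSD-plus-Schur argument.
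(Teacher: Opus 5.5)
Your proof is correct and follows the paper's argument essentially step for step: the same normalized covariance matrix, positive semidefiniteness plus the Schur product theorem, the odd-moment split in which edges contribute $-1$ each, the $q^{-\ell}$ cap on each good term, and the final averaging over $v$. The only differences are cosmetic. You get $R\succeq0$ as a diagonal congruence of the covariance matrix, whereas the paper writes it directly as a Gram matrix of centered, normalized indicators. You also bound every term with $|R|<\tfrac12$ by $2^{-\ell}$, whereas the paper discards all negative terms.
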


\begin{proof}
Define a matrix $M$ indexed by $u, v \in U$ as
\[
M(u,v) = \frac{\mu(u,v)}{\mu(u)\mu(v)}-1.
\]
We claim that $M$ is positive semidefinite. To see this,
consider the matrix $R$ indexed by $u \in U$ and an independent set $I$, given by $R(u,I)=\frac{\one{u\in I}}{\mu(u)}-1$, and note that $M(u,v)=\E_{I\sim\mu}R(u,I)R(v,I)$.
The marginal bounds on $\mu$ in \Cref{eq:marginal-bounds} and our assumptions give
\[
  M(u,v)=-1\quad\text{if }uv\in E,\qquad
  M(u,v)\le q^{-1}\quad\text{for all }u,v\in U.
\]
Write $M^{\circ\ell}=(M(u,v)^\ell)_{u,v\in U}$ for the entrywise
$\ell$-th power of $M$. Since $M$ is positive semidefinite, by the Schur product theorem~\cite{schur1911bemerkungen},
$M^{\circ\ell}$ is also positive semidefinite, and so
\[
  \E_{u,v\in U}M(u,v)^\ell
  =\frac{\boldsymbol{1}^{\mathsf T}M^{\circ\ell}\boldsymbol{1}}{|U|^2}
  \ge0,
\]
where $\boldsymbol{1}$ is the all-ones vector. This nonnegativity can also
be seen directly:
\[
  \E_{u,v\in U} \left[ M(u,v)^\ell \right]
  =\E_{I_1,\ldots,I_\ell\sim\mu}
    \left(\E_{u\in U}\prod_{j=1}^{\ell}R(u,I_j)\right)^2
  \ge0.
\]
We next consider the contribution of each term $M(u,v)$ to the $\ell$-th moment.
Since $\ell$ is odd, each edge contributes $-1$ and hence
edges contribute $-d(U)$ in total.
Other negative terms contribute a negative amount. Terms with
$0\le M(u,v)<1/2$ contribute at most $2^{-\ell}$ in total, while those
with $M(u,v)\ge1/2$ contribute at most $\rho q^{-\ell}$, where  $\rho=\Pr_{u,v\in U}[M(u,v)\ge1/2]$. Thus
\[
  0\le\E_{u,v\in U}\left[ M(u,v)^\ell \right]
    \le-d(U)+2^{-\ell}+\rho q^{-\ell}
    \le-\frac\delta2+\rho q^{-\ell}.
\]
This implies that $\rho\ge(\delta/2)q^\ell$. By averaging, some $v\in U$ has at
least $\rho|U|$ vertices $u\in U$ with $M(u,v)\ge1/2$. For each such
vertex $u$,
\[
  \mu_v(u)=\frac{\mu(u,v)}{\mu(v)}\ge\tfrac32\mu(u),
\]
which proves the claim.
\end{proof}

\subsection{Finding a large sparse set by repeated conditioning}

We now iterate \Cref{lemma:increment}. Each conditioning multiplies the marginal
threshold by $3/2$. This process must stop because inclusion marginals are bounded by $1$.

\begin{lemma}[A large sparse set after conditioning]\label{lem:iterate}
Let $t,h\ge1$ be integers, and assume $N\ge2^{3(t+2)^2(h+2)}$.
Let $\mu$ be a distribution on independent sets of $G$ such that
$\E_{I\sim\mu}|I|\ge2^{-t}N$.
For each $S\subseteq V$ with $\mu(S)>0$, define
\[
  U_S=\left\{u\in V\setminus S:
    \mu_S(u)\ge(3/2)^{|S|}2^{-(t+1)}\right\}.
\]
There is such an $S$ with $|S|\le2(t+1)$ for which
\[
  |U_S|\ge2^{-3(t+2)^2(h+2)}N,\qquad d(U_S)\le2^{-h}.
\]
\end{lemma}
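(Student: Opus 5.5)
We track a conditioning set $S$ and grow it greedily, one vertex per step, using \Cref{lemma:increment} applied to the conditioned distribution $\mu_S$. At each step either $U_S$ is already sparse and we stop, or the lemma lets us add a vertex to $S$. Two things then need checking: the process cannot run for more than $2(t+1)$ steps, and $U_S$ shrinks by a controlled factor per step.

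\emph{Initialization.} Take $S=\emptyset$, so $U_\emptyset=\{u:\mu(u)\ge2^{-(t+1)}\}$. Since $\sum_u\mu(u)=\E|I|\ge2^{-t}N$, and vertices outside $U_\emptyset$ contribute less than $2^{-(t+1)}N$ to this sum, we get $|U_\emptyset|\ge2^{-(t+1)}N$.

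\emph{Inductive step.} Suppose $\mu(S)>0$, $U_S\neq\emptyset$ and $d(U_S)>2^{-h}$. Apply \Cref{lemma:increment} to $\mu_S$ and $U=U_S$ with the following parameters:
\begin{itemize}
\item $\delta=2^{-h}$;
\item $q=(3/2)^{|S|}2^{-(t+1)}$, which is at most $1$ because marginals are at most $1$ and $U_S\neq\emptyset$;
\item $\ell\in\{h+1,h+2\}$ odd, so that $2^{-\ell}\le\delta/2$.
\end{itemize}
The lemma gives $v\in U_S$ and a set $W\subseteq U_S$ with $|W|\ge\tfrac\delta2 q^\ell|U_S|$. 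Every $u\in W$ satisfies
\[
\mu_{S\cup\{v\}}(u)\ge\tfrac32\mu_S(u)\ge(3/2)^{|S|+1}2^{-(t+1)}.
\]
We have $\mu(S\cup\{v\})=\mu(S)\mu_S(v)\ge\mu(S)q>0$, so conditioning on $S\cup\{v\}$ is legitimate. Also $v\notin S$, since $U_S\subseteq V\setminus S$. Hence $W\setminus\{v\}\subseteq U_{S\cup\{v\}}$, and
\[
|U_{S\cup\{v\}}|\ge\tfrac\delta2q^\ell|U_S|-1.
\]
Replace $S$ by $S\cup\{v\}$ and repeat.

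\emph{Termination.} While $U_S\neq\emptyset$ we need $(3/2)^{|S|}2^{-(t+1)}\le1$. This forces $|S|\le(t+1)/\log_2(3/2)<2(t+1)$. So if we can keep $U_S$ nonempty, the process must halt with $d(U_S)\le2^{-h}$ after at most $2(t+1)$ steps. The only other way to stop would be $U_S=\emptyset$, and the size bound below rules this out.

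\emph{Size bound.} Since $q\ge2^{-(t+1)}$ and $\ell\le h+2$, each step multiplies $|U_S|$ by at least
\[
2^{-(h+1)}2^{-(t+1)(h+2)}
\]
before subtracting $1$. Over at most $2(t+1)$ steps, starting from $|U_\emptyset|\ge 2^{-(t+1)}N$, the multiplicative loss is at most
\[
2^{(t+1)+2(t+1)(h+1)+2(t+1)^2(h+2)}.
\]
The exponent here is at most about $2(t+2)^2(h+2)$, leaving slack compared to $3(t+2)^2(h+2)$.

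The $-1$ terms are absorbed as follows. Whenever the multiplicative bound is at least $2$, we have $x-1\ge x/2$, which costs one extra factor of $2$ per step, i.e.\ at most $2^{2(t+1)}$ overall. The hypothesis $N\ge2^{3(t+2)^2(h+2)}$ is exactly what keeps all intermediate bounds at least $2$, and hence keeps $U_S$ nonempty.

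\emph{Main obstacle.} I expect the main work to be this last bookkeeping: checking that all the exponents, including the choice of $\ell$ up to $h+2$, the $2(t+1)$ steps and the halving from the $-1$ terms, fit under $3(t+2)^2(h+2)$. The structural content is already supplied by \Cref{lemma:increment} and the cap of $1$ on marginals.
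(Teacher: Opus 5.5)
Your proposal is correct and follows essentially the same route as the paper: greedy conditioning via \Cref{lemma:increment} with $\ell\in\{h+1,h+2\}$ odd, giving a per-step factor $2^{-(h+1)-(t+1)(h+2)}=2/K$ with $K=2^{(t+2)(h+2)}$. The remaining steps also match: the $-1$ is absorbed by halving (the paper writes this as $2|U_S|/K-1\ge|U_S|/K$ once $|U_S|\ge K$), termination follows because the threshold $(3/2)^{|S|}2^{-(t+1)}$ exceeds $1$ before $|S|=2(t+1)$, and your bookkeeping closes exactly, with total exponent $(t+1)+2(t+1)(t+2)(h+2)\le 2(t+2)^2(h+2)$ and the hypothesis on $N$ keeping every intermediate bound at least $K$.
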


\begin{proof}
Start with $S=\emptyset$. Since
\[
  2^{-t}N\le\E_{I\sim\mu}|I|
    =\sum_{u\in V}\mu(u)\le|U_\emptyset|+2^{-(t+1)}N,
\]
we have $|U_\emptyset|\ge2^{-(t+1)}N$.
Let $K=2^{(t+2)(h+2)}$. We will maintain the invariant
$|U_S|\ge2^{-(t+1)} K^{-|S|} N$ throughout the process.

Suppose $\mu(S)>0$ and $d(U_S)>2^{-h}$.
Apply \Cref{lemma:increment} to $\mu_S$ and $U_S$, using the marginal lower
bound $\mu_S(u) \ge (3/2)^{|S|} 2^{-(t+1)} \ge 2^{-(t+1)}$ for all $u \in U_S$, and choosing $\ell\in\{h+1,h+2\}$ to be odd.
We obtain a vertex $v\in U_S$ for which at least
\[
  2^{-(h+1)-(t+1)\ell}|U_S|\ge\frac{2|U_S|}{K}
\]
vertices $u\in U_S$ satisfy
$\mu_{S\cup\{v\}}(u)\ge\tfrac32\mu_S(u)$.
After removing $v$, these vertices lie in $U_{S\cup\{v\}}$. We note that the lower bound on $|U_S|$ and our assumption that $N$ is large enough imply $|U_S| \ge K$.
Hence
\[
  |U_{S\cup\{v\}}|\ge\frac{2|U_S|}{K}-1\ge\frac{|U_S|}{K} \ge 2^{-(t+1)}  K^{-(|S|+1)} N,
\]
which maintains the invariant.
Since $S\cap U_S=\emptyset$, this adds one vertex to $S$; moreover,
$\mu(S\cup\{v\})=\mu(S)\mu_S(v)>0$.

Repeat this process as long as $d(U_S)>2^{-h}$. The process must stop before $|S|=2(t+1)$:
at that point the marginal threshold would be
$(3/2)^{2(t+1)}2^{-(t+1)}=(9/8)^{t+1}>1$, forcing $U_S=\emptyset$
and contradicting the size bound $|U_S|\ge K$ that we showed. When the process terminates at some set $S$, we have $d(U_S)\le2^{-h}$ and
\[
  |U_S|\ge 2^{-(t+1)} K^{-|S|} N  \ge 2^{-(t+1)}K^{-2(t+1)} N
    \ge2^{-3(t+2)^2(h+2)}N.
\]
\end{proof}

\subsection{Finding large sparse sets using pseudo-marginals}

We now turn our combinatorial observations into efficient algorithms. The first step is to note that in order to apply \Cref{lem:iterate} we do not need full access to $\mu$; it suffices to get access to small marginals of $\mu$. In fact, these do not have to be marginals of a real distribution; a pseudo-distribution is sufficient. The marginals of such a pseudo-distribution can be found by a semidefinite program.

\paragraph{Marginals suffice.}
Suppose we have access to the marginals $\mu(T)$ for all $|T|\le2t+3$, where
$\mu$ satisfies the assumptions of \Cref{lem:iterate}.
Enumerate all $S\subseteq V$ with $|S|\le2(t+1)$ and $\mu(S)>0$.
For each $S$, compute
\[
  \mu_S(u)=\frac{\mu(S\cup\{u\})}{\mu(S)},\qquad
  U_S=\left\{u\in V\setminus S:
    \mu_S(u)\ge(3/2)^{|S|}2^{-(t+1)}\right\}.
\]
Check the size and edge density of $U_S$ directly in $G$ and return a set
satisfying the conclusion of the lemma. The lemma guarantees that one
exists. There are $N^{O(t)}$ candidates, and each can be checked in polynomial time.
Thus small marginals suffice; we do not need to sample full independent sets from $\mu$.

\paragraph{Computing pseudo-marginals by SDP.}
A level-$r$ pseudo-distribution on independent sets is represented by numbers
$\widetilde\mu(T)$ for $T\subseteq V$, $|T|\le2r$, satisfying
\begin{equation}\label{eq:pseudo-marginals}
  \widetilde\mu(\emptyset)=1,\qquad
  \widetilde\mu(T)=0\ \text{if $T$ contains an edge},\qquad
  \bigl(\widetilde\mu(A\cup B)\bigr)_{|A|,|B|\le r}\succeq0.
\end{equation}
These numbers need not be the marginals of an actual distribution.
The last constraint requires the matrix indexed by sets $A,B$ of
size at most $r$, with entry $\widetilde\mu(A\cup B)$, to be positive
semidefinite. This matrix is often called the \emph{moment matrix}.
Maximizing $\sum_{u\in V}\widetilde\mu(u)$ subject to
\eqref{eq:pseudo-marginals} is the standard Lasserre SDP relaxation for
independent set~\cite{laurent2003comparison}.
We use the standard algorithmic guarantee: if $\alpha(G)\ge2^{-t}N$,
then in time $N^{O(r)}\operatorname{poly}(t)$ one can compute rational
feasible pseudo-marginals with
\begin{equation}\label{eq:pseudo-size}
  \sum_{u\in V}\widetilde\mu(u)\ge2^{-(t+1)}N.
\end{equation}
The factor-two slack allows approximate optimization of the objective;
the marginal constraints remain exact. For the bit-complexity guarantee
for this relaxation, see~\cite{raghavendra2017bit,gribling2023note}.

We verify that these pseudo-marginals can replace $\mu$ in the combinatorial
argument.

\begin{claim}\label{claim:pseudo-marginal-bounds}
At level $r\ge2$, the pseudo-marginals satisfy the bounds in
\Cref{eq:marginal-bounds}, with $\widetilde\mu$ in place of $\mu$.
\end{claim}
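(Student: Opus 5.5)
We will derive each bound in \Cref{eq:marginal-bounds} from nonnegativity of principal minors of the moment matrix $\mathcal M=(\widetilde\mu(A\cup B))_{|A|,|B|\le r}$. Since $r\ge2$, this matrix has rows and columns indexed by $\emptyset$, $\{u\}$, $\{v\}$ and $\{u,v\}$ for all $u,v\in V$, so all entries we need are available. The vanishing on edges, $\widetilde\mu(u,v)=0$ for $uv\in E$, is simply the constraint $\widetilde\mu(T)=0$ for $T=\{u,v\}$ in \eqref{eq:pseudo-marginals}.

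\textbf{Single-vertex bounds.} The diagonal entry of $\mathcal M$ at index $\{v\}$ is $\widetilde\mu(v)$, so $\widetilde\mu(v)\ge0$. Next consider the $2\times2$ principal submatrix on the indices $\emptyset,\{v\}$:
\[
\begin{pmatrix}1&\widetilde\mu(v)\\ \widetilde\mu(v)&\widetilde\mu(v)\end{pmatrix}\succeq0 .
\]
Its determinant is $\widetilde\mu(v)-\widetilde\mu(v)^2$, and nonnegativity gives $\widetilde\mu(v)\le1$. Likewise, the diagonal entry at $\{u,v\}$ is $\widetilde\mu(u,v)$, so $\widetilde\mu(u,v)\ge0$; this index is why we need level $r\ge2$.

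\textbf{Pair bound.} This is the only step that needs a little care. The natural attempt, the $2\times2$ minor on $\{u\},\{u,v\}$, has entries $\widetilde\mu(u)$, $\widetilde\mu(u,v)$, $\widetilde\mu(u,v)$, $\widetilde\mu(u,v)$. Its determinant gives $\widetilde\mu(u,v)\bigl(\widetilde\mu(u)-\widetilde\mu(u,v)\bigr)\ge0$. If $\widetilde\mu(u,v)>0$, we can divide to get $\widetilde\mu(u,v)\le\widetilde\mu(u)$; if $\widetilde\mu(u,v)=0$, the bound follows from $\widetilde\mu(u)\ge0$.

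A cleaner route is to use the quadratic form directly. With $x$ equal to $1$ at $\{u\}$ and $-1$ at $\{u,v\}$, we get
\[
x^{\mathsf T}\mathcal M x=\widetilde\mu(u)-2\widetilde\mu(u,v)+\widetilde\mu(u,v)=\widetilde\mu(u)-\widetilde\mu(u,v)\ge0 .
\]
Exchanging the roles of $u$ and $v$ gives $\widetilde\mu(u,v)\le\widetilde\mu(v)$, so $\widetilde\mu(u,v)\le\min\{\widetilde\mu(u),\widetilde\mu(v)\}$.

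The remaining point is a technicality: each quadratic form above must use only indices of size at most $r$, and $r\ge2$ guarantees this. The case $u=v$ should also be handled. There $\widetilde\mu(u,u)=\widetilde\mu(u)$ since $\{u\}\cup\{u\}=\{u\}$, and all the bounds hold trivially.
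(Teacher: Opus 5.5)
Your proposal is correct and matches the paper's proof. Both use the same three ingredients: the principal submatrix on $\emptyset,\{u\}$ for $0\le\widetilde\mu(u)\le1$, the diagonal entry at $\{u,v\}$ for nonnegativity, and the quadratic form on the difference of the indices $\{u\}$ and $\{u,v\}$ for $\widetilde\mu(u,v)\le\widetilde\mu(u)$, with edge vanishing read off directly from the constraints. Your determinant variant and the $u=v$ remark are harmless extras; for $\widetilde\mu(v)\le1$, the case $\widetilde\mu(v)=0$ is trivial, or you can instead use the form on $e_\emptyset-e_{\{v\}}$, which gives $1-\widetilde\mu(v)\ge0$.
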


\begin{proof}
The principal submatrix on $\emptyset,\{u\}$ gives
$0\le\widetilde\mu(u)\le1$, the diagonal entry indexed by $\{u,v\}$ gives nonnegativity,
and the quadratic form on the difference of the indices $\{u\}$ and
$\{u,v\}$ gives $\widetilde\mu(u)-\widetilde\mu(u,v)\ge0$.
The edge marginals vanish by \Cref{eq:pseudo-marginals}.
\end{proof}

Conditioning uses the same notation and formula as before:
\[
  \widetilde\mu_S(T)=
    \frac{\widetilde\mu(S\cup T)}{\widetilde\mu(S)}
    \qquad\text{when }|S|\le r\text{ and }\widetilde\mu(S)>0.
\]
This gives a level-$(r-|S|)$ pseudo-distribution. Its moment matrix has entries
\[
  \widetilde\mu_S(A\cup B)
  =\frac{\widetilde\mu(A \cup B \cup S)}
         {\widetilde\mu(S)},\qquad |A|,|B|\le r-|S|,
\]
so it is obtained by selecting (possibly repeating) rows and columns $A \cup S, B \cup S$ of
the original moment matrix and dividing by $\widetilde\mu(S)>0$.
It is therefore also positive semidefinite. Normalization and the edge
constraints are also preserved.
Thus each conditioning consumes one level of the pseudo-distribution.

It remains to justify the nonnegative moment used in
\Cref{lemma:increment}. On any set $U$ with positive marginals, define
\[
  M(u,v)=\frac{\widetilde\mu(u,v)}
                    {\widetilde\mu(u)\widetilde\mu(v)}-1.
\]
\begin{claim}\label{claim:covariance-psd}
The matrix $M$ is positive semidefinite.
\end{claim}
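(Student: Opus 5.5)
We show that $M$ is a congruence transform of a principal submatrix of the level-$r$ moment matrix (we need $r\ge1$; the paper works at $r\ge2$). Since congruences and principal submatrices preserve positive semidefiniteness, this suffices. The natural submatrix is the one indexed by $\emptyset$ and the singletons $\{u\}$ for $u\in U$. Call it $P$. Its entries are
\[
  P(\emptyset,\emptyset)=\widetilde\mu(\emptyset)=1,\qquad
  P(\emptyset,\{u\})=\widetilde\mu(u),\qquad
  P(\{u\},\{v\})=\widetilde\mu(u,v),
\]
where the diagonal entries $P(\{u\},\{u\})=\widetilde\mu(u)$ are consistent because $\{u\}\cup\{u\}=\{u\}$. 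Since $P$ is a principal submatrix of the moment matrix in \eqref{eq:pseudo-marginals}, we have $P\succeq0$.

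The vectors $R(u,\cdot)=\one{u\in I}/\mu(u)-1$ from the proof of \Cref{lemma:increment} suggest the transform. We imitate them through the linear map sending the index vector $e_{\{u\}}$ to $e_{\{u\}}/\widetilde\mu(u)-e_\emptyset$. Let $C$ be the matrix with columns indexed by $u\in U$ and rows indexed by $\emptyset$ and the singletons, with column $u$ equal to $e_{\{u\}}/\widetilde\mu(u)-e_\emptyset$. This is well defined because the marginals on $U$ are positive. Then $C^{\mathsf T}PC\succeq0$, since $x^{\mathsf T}C^{\mathsf T}PCx=(Cx)^{\mathsf T}P(Cx)\ge0$.

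It remains to compute the entries of $C^{\mathsf T}PC$:
\[
  (C^{\mathsf T}PC)(u,v)=\frac{\widetilde\mu(u,v)}{\widetilde\mu(u)\widetilde\mu(v)}
  -\frac{\widetilde\mu(u)}{\widetilde\mu(u)}
  -\frac{\widetilde\mu(v)}{\widetilde\mu(v)}
  +1
  =\frac{\widetilde\mu(u,v)}{\widetilde\mu(u)\widetilde\mu(v)}-1
  =M(u,v).
\]
This also holds on the diagonal, where $u=v$ and $\widetilde\mu(u,u)=\widetilde\mu(u)$ gives $M(u,u)=1/\widetilde\mu(u)-1$. Hence $M=C^{\mathsf T}PC\succeq0$.

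No step is a real obstacle. The only point needing care is the diagonal bookkeeping: the moment matrix identifies $\{u\}\cup\{u\}$ with $\{u\}$, and the definition of $M$ must be read with $\widetilde\mu(u,u)=\widetilde\mu(u)$. Once that is fixed, the rest is the displayed computation. A second remark is that the argument only uses level $1$ of the pseudo-distribution. So after conditioning on $S$ with $|S|\le 2(t+1)$, it applies to $\widetilde\mu_S$ as long as $r\ge 2t+3$, which matches the marginals $|T|\le 2t+3$ used in the algorithm.
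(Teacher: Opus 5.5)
Your proof is correct and is essentially the paper's argument. The paper substitutes the coefficient vector $b_{\{u\}}=a_u/\widetilde\mu(u)$, $b_\emptyset=-\sum_{u\in U}a_u$ into the moment-matrix quadratic form; this vector is exactly your $Ca$, so your congruence $M=C^{\mathsf T}PC$ is the same computation in matrix form (and, as you note, it uses only level $1$; the paper's level-$2$ requirement comes from \Cref{claim:pseudo-marginal-bounds}).
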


\begin{proof}
Let $(a_u)_{u\in U}$ be arbitrary real coefficients. Define coefficients
$b_A$, indexed by sets $A$ of size at most $r$, by
$b_{\{u\}}=a_u/\widetilde\mu(u)$ for $u\in U$,
$b_\emptyset=-\sum_{u\in U}a_u$, and $b_A=0$ elsewhere.
Positive semidefiniteness of the moment matrix gives
\[
\begin{aligned}
  0&\le\sum_{\substack{A,B\subseteq V\\|A|,|B|\le r}}
       b_A b_B\,\widetilde\mu(A\cup B)\\
   &=\sum_{u,v\in U}a_ua_v
    \frac{\widetilde\mu(u,v)}{\widetilde\mu(u)\widetilde\mu(v)}
  -2\left(\sum_{u\in U}a_u\right)^2
  +\left(\sum_{u\in U}a_u\right)^2\\
   &=\sum_{u,v\in U}a_ua_vM(u,v).
\end{aligned}
\]
Since this holds for every choice of the coefficients, $M\succeq0$.
\end{proof}

By the Schur product theorem~\cite{schur1911bemerkungen}, every entrywise positive integer power
$M^{\circ\ell}=(M(u,v)^\ell)_{u,v\in U}$ is positive semidefinite. Hence
\[
  \E_{u,v\in U}M(u,v)^\ell
    =\frac{\boldsymbol{1}^{\mathsf T}M^{\circ\ell}\boldsymbol{1}}
           {|U|^2}\ge0,
\]
where $\boldsymbol{1}$ is the all-ones vector.
The rest of \Cref{lemma:increment} uses only the marginal bounds and the
conditioning formula, so its proof applies unchanged.
In particular, taking a high entrywise power requires no additional levels.
Consequently, \Cref{lem:iterate} holds for pseudo-distributions as
well, provided at least two levels remain after all the conditionings.

\begin{proof}[Proof of \Cref{thm:sparse}]
If $N<2^{3(t+3)^2(h+2)}$, a singleton satisfies the required bounds.
Otherwise, compute a level-$r$ pseudo-distribution with $r=2(t+2)+2$ satisfying
\Cref{eq:pseudo-size}. Apply the pseudo-distribution version of
\Cref{lem:iterate} with $t+1$ in place of $t$.
There are at most $2(t+2)$ conditionings, so two levels remain throughout.
The enumeration described above, using $\widetilde\mu$ in place of $\mu$,
therefore finds a set $U$ with
\[
  |U|\ge2^{-3(t+3)^2(h+2)}N=2^{-O(t^2h)}N,
  \qquad d(U)\le2^{-h}.
\]
Both the SDP computation and the enumeration take time $N^{O(t)}$.
\end{proof}

\section{AI methodology}
\label{sec:ai}

This project began with an idea I had been considering for some time:
additive combinatorics can be used to show that a large independent set in
a Cayley graph implies the existence of a large sparse structured set,
which can be found in quasi-polynomial time. More recently, I realized that
the spreadness techniques we developed in \cite{kelley2024explicit} could extend this
approach to general graphs, but I did not know how to make the argument
algorithmic. I used the web version of ChatGPT Astra to help develop the
proof. The interaction proceeded as follows:
\begin{enumerate}
    \item I first asked it to prove the conjecture for Cayley graphs over
    $\F_2^n$, where I already knew a proof, to see whether it could find one
    in the literature or derive one itself. It did neither.
    \item I explained my proof for Cayley graphs and asked it to extend
    the argument to general graphs, without further hints. It did not
    find an extension.
    \item I then suggested using spreadness and sifting
    from \cite{kelley2024explicit} to obtain a density increment, and asked it to work
    out the details. This and further interactions led to the combinatorial results in this paper.
    \item Finally, I asked how to turn these combinatorial observations into an efficient
    algorithm. It identified the connection to SDP relaxations and
    pseudo-marginals, which I had not been aware of.
\end{enumerate}

AI was also used to copy-edit the paper, calculate parameters and do a literature search.

\bibliographystyle{alpha}
\bibliography{indset}

\end{document}